\newcommand{\R}{\mathds{R}}
\newcommand{\V}{\mathcal{V}}
\newcommand{\E}{\mathcal{E}}
\newcommand{\G}{\mathcal{G}}
\newcommand{\ROT}{\mathrm{R}}
\newcommand{\Q}{\mathrm{Q}}
\newcommand{\coeff}{\mathbf{c}}
\newtheorem{definition}{Definition}
\newtheorem{proposition}{Proposition}
\newtheorem{theorem}{Theorem}
\newcommand{\removed}[1]{\cbstart\removedfragile{#1}\cbend{}}
\newcommand{\removedfragile}[1]{{\color{red}{\sout{#1}}}{}}
\newcommand{\added}[1]{\cbstart\addedfragile{#1}\cbend{}}
\newcommand{\changed}[2]{\added{#1}\removed{#2}}
  \newcommand{\removed}[1]{} 
  \newcommand{\removedfragile}[1]{}
  \newcommand{\added}[1]{#1}
  \newcommand{\changed}[2]{#1}
\title{Steerable Discrete Cosine Transform}
\author{
{Giulia Fracastoro{\small $~^{\#}$}, Sophie M. Fosson{\small $~^{\#}$}, Enrico Magli{\small $~^{\#}$}}
\vspace{1.6mm}\\
\fontsize{10}{10}\selectfont\itshape
$^{\#}$\,Department of Electronics and Telecommunications, Politecnico di Torino, Italy\\
\fontsize{9}{9}\selectfont\ttfamily\upshape}
\begin{document}
\maketitle

\begin{abstract}
In image compression, classical block-based separable transforms tend to be inefficient when image blocks contain arbitrarily shaped discontinuities. For this reason, transforms incorporating directional information are an appealing alternative. In this paper, we propose a new approach to this problem, namely a discrete cosine transform (DCT) that can be steered in any chosen direction. Such transform, called steerable DCT (SDCT), allows to rotate in a flexible way pairs of basis vectors, and enables precise matching of directionality in each image block, achieving improved coding efficiency. The optimal rotation angles for SDCT can be represented as solution of a suitable rate-distortion (RD) problem. We propose iterative methods to search such solution, and we develop a fully fledged image encoder to practically compare our  techniques with other competing transforms. Analytical and numerical results prove that SDCT outperforms both DCT and state-of-the-art directional transforms.
\end{abstract}
%\begin{keywords} Image compression, DCT, graph signal processing, RD optimization.\end{keywords}
%\IEEEpeerreviewmaketitle
\maketitle
\section{Introduction}
\label{sec:intro}

In image and video compression, the two-dimensional discrete cosine transform (2D-DCT) is very popular for its well-known energy compaction properties \cite{ahm74,sayood2012introduction}. The 2D-DCT is obtained applying two separable 1D-DCT transforms along the vertical and horizontal directions, respectively. For this reason, it is very efficient at compressing images in which horizontal or vertical edges are dominating  \cite{zeng2008directional}. Nevertheless, when blocks contain significant directional features and arbitrarily shaped discontinuities the 2D-DCT compression is less efficient \cite{kim2012graph}.

To overcome this problem, various approaches and solutions have been developed \cite{xu2010overview}, most of which consist in modifications of the 2D-DCT in order to incorporate directional information \cite{zeng2008directional,chang2008direction,kamisli2009transforms,cohen2010direction,dre10}.
The Directional DCT (DDCT) presented in \cite{zeng2008directional} is the first attempt in this sense. It consists in a separable transform in which the first 1D-DCT may follow a direction other than the vertical or horizontal one; then the coefficients produced by all directional transforms in the first step are rearranged so that the second transform can be applied to those coefficients that are best aligned with each other.
Later, other works have followed this approach. In  \cite{chang2008direction}, the authors have introduced new directions for the first transform and have proposed a new zigzag scanning method. In \cite{kamisli2009transforms}, it is suggested  to not apply the second-stage DCT, or to apply it only on the DC coefficients generated during the first transform \cite{cohen2010direction}. In \cite{dre10}, DDCT \cite{zeng2008directional} is improved using anisotropic local basis supports, where the optimal basis is selected exploiting the bintree structure of the  dictionary.

These methods, however, have several issues. In particular, they require 1D-DCTs of various lengths, some of which are very short and are not always a power of 2; moreover, the second DCT may not always be applied to coefficients of similar AC frequencies \cite{yeo2012mode}. In our tests, we have also noticed that the performance of the DDCT decreases when the block size increases.

Another method to introduce directionality in the DCT has been presented in \cite{xu2007lifting}, where  directional primary operations have been introduced for the lifting-based DCT. In this way, the DCT-like lifting transform can be applied along any direction, but it extends across block boundaries in order to apply direction adaptation.

In the specific case of intra-frame video coding, another approach has been investigated: the transform is constructed by a directional prediction and a corresponding data-dependent transform. In \cite{ye2008improved}, mode-dependent directional transforms have been derived from Karhunen-Lo\`{e}ve transform, using prediction residuals from training video data. Various follow-up works have then enhanced \cite{ye2008improved} exploiting the symmetry to reduce the number of transform matrices needed \cite{budagavi2010orthogonal, yang2010simplified, tanizawa2010improvement}. \changed{To further improve the performance, several other mode-dependent directional transforms have been proposed, such as the mode-dependent sparse transform \cite{sezer2011robust} and the rate-distortion optimized transform \cite{zhao2012video}. Another data-dependent directional transform called Sparse Orthonormal Transform has been proposed in  \cite{sezer2008sparse} and \cite{sezer2015approximation}. In this case, the image blocks are classified using the image gradient. Then, the transform of each class is optimized by minimizing on a training set an approximation cost. }{In [18], another data-dependent directional transform has been proposed, based on  classification of the blocks using the image gradient. Then, the transform of each class is optimized by maximizing on a training set the sparsity of the transformed vectors.} A common problem of these methods is that training sets must be processed to obtain transforms that are optimal for a given class, so the transform is always dependent on the training set used.

In the last few years, a new approach to image and video coding has been emerging, i.e., transforms on graphs. An image can be viewed as a graph, where each pixel is a node of the graph and the \changed{edges }{arcs} describe the connectivity relations among pixels, e.g., in terms of similarity \cite{shuman2013emerging}. It is possible to define a transform on this domain, called graph Fourier transform \cite{shuman2013emerging}. Thanks to the graph representation, the corresponding transform is ``aware'' of image discontinuities, which are downplayed so as to minimize generation of high-frequency coefficients and maximize energy compaction. \added{Different connectivity patterns lead to different graph transforms. In image applications, the structure of the graph is usually a 4-connected grid graph, where each pixel is connected to its 4 nearest neighbors. This structure has a strong connection with the DCT, because the graph transform of a uniform 4-connected grid graph may be equal to the DCT.} Block-based methods using graph Fourier transform have been proposed in \cite{shen2010edge,kim2012graph,hu2015multiresolution}, but they reported unsatisfactory results on natural images that are not piece-wise smooth.  
\changed{For the specific case of residual coding, a few methods using a graph-based approach have been recently proposed. A novel graph-based method for intra-frame video coding has been presented in \cite{hu2015intra}, which introduces a new generalized graph Fourier transform optimized for intra-prediction residues. }{Recently, a novel graph-based method for intra-frame video coding has been proposed [23], which introduces a new generalized graph Fourier transform optimized for intra-prediction residues.} \added{Instead, in \cite{chao2016graph} the authors propose a block-based lifting transform on graphs for intra-predicted video coding. Moreover, a graph-based method for inter-predicted video coding has been introduced in \cite{egilmez2015graph}, where the authors design a set of simplified graph templates capturing basic statistical characteristics of inter-predicted residual blocks.}
However, one of the main drawbacks of graph-based compression techniques lies in the cost required to represent and encode the graph, which may outweigh the coding gain provided by the edge adaptive transform. For this reason,  some graph-based compression methods that require a small overhead have recently been developed \cite{ioa,iob};  their performance is competitive compared to the DCT, but the price is a high computational cost to construct the transform matrix.   

In our previous work \cite{ioc}, we have presented a new framework for directional transforms. Starting from the graph transform of a grid graph, we have designed a new transform, called steerable DCT (SDCT), which can be obtained  by rotating the 2D-DCT basis by a single given angle for each image block.

In this paper, we analyse the broader problem of finding the best {\em set of rotations} of the 2D-DCT basis for each image block. In particular, we generalize \cite{ioc}  considering an angle for each frequency, which can potentially provide a more compact representation at the price of more side information to transmit. The tradeoff can be analysed from a RD perspective.  We first cast the problem as the minimization of a RD functional. The minimum provides the optimal number of rotation angles per block as well as the angles' values. The problem is well-posed (the global minimum exists), but it is non-convex, hence finding the global minimum  is tricky. The best feasible strategy that one can conceive in such case is iterative alternated minimization, that allows to get to a local minimum or a saddle. This is the basis of our first proposed algorithm, named steerable DCT through alternated minimization (SDCT-AM). If suitably initialized, SDCT-AM is proved to always outperform DCT in RD terms. We have also investigated other strategies to define and transmit the angles' distribution, in order to reduce the angles' transmission cost, and propose a subdivisions into subbands that can be encoded as a binary tree. This is the key idea for our second proposed algorithm, named SDCT-BT, which significantly decreases the amount of side information. Moreover, while in \cite{ioc} the SDCT performance is assessed only in terms of energy compaction, in this paper we develop a fully fledged image encoder to compare the proposed technique with other competing transforms.

The paper is organized as follows. In Section \ref{sec:graph} we define the proposed transform, starting from  graph transforms. Afterwards, in Section \ref{sec:optimal_rotation} we  state our problem in terms of a RD optimization problem, and define the optimal rotation. In Section \ref{sec:algorithms}, we present the SDCT-AM and SDCT-BT algorithms. Section \ref{sec:experimental} is devoted to experimental tests, in which we compare our method to 2D-DCT and directional methods. Finally, in Section \ref{sec:conc} we draw some conclusions.

\section{Steering the DCT}\label{sec:graph}
\subsection{Preliminaries}
We first review some elements of graph signal processing, specifically the concept of graph Fourier transform and its relation to DCT.

We denote an undirected graph as  $\G=(\V,\E)$, where $\V$ is the set of vertices and $\E\subset \V\times \V$ is the set of \changed{edges }{arcs}. Given two graphs $\G_1=(\V_1,\E_1)$ and $\G_2=(\V_2,\E_2)$, let $\G=\G_1 \times \G_2$ be the product graph of $\G_1$ and $\G_2$. Suppose $v_1$, $v_2 \in \V_1$ and $u_1$, $u_2\in \V_2$. Then $(v_1,u_1)$ and $(v_2,u_2)$ are adjacent in $\G$ if and only if one of the following conditions is satisfied \cite{merris1994laplacian}: a) $v_1=v_2$ and $\{u_1,u_2\}\in \E_2$; b) $\{v_1,v_2\}\in \E_1$ and $u_1=u_2$.

For any graph $\G=(\V,\E)$  with $|\V|=N$, we  define the adjacency matrix $A(\G)\in \R^{N\times N}$, where $A(\G)_{ij}=1$ if there is an edge between node $i$ and $j$, otherwise $A(\G)_{ij}=0$. In this paper, we consider undirected graphs with no self loops, that is, $A$ is symmetric and has null diagonal.

\begin{definition}[see \cite{chung1997spectral}] The Laplacian matrix of a graph $\G$ is defined as $L(\G)=\Gamma(\G)-A(\G)\in\R^{N\times N}$, where $A(\G)$ is the adjacency matrix, and $\Gamma(\G)$ is a diagonal matrix with $\Gamma_{ii}$ equal to the number of edges incident to node $i$.
\end{definition}

Any signal $\mathbf{f}\in\R^N$ can be associated with a graph $\G_{\mathbf{f}}=(\V,\E)$ with $|\V|=N$ \cite{shuman2013emerging}; each component $\mathbf{f}_i$, $i=1,\dots,N$ is associated with vertex $v_i\in \V$. On $\G_{\mathbf{f}}$, we define the so-called  graph Fourier transform \cite{shuman2013emerging} of $\mathbf{f}$ as follows: 
\[
\mathbf{c}=U^T \mathbf{f}, 
\]
where $U$ is the matrix whose columns are the eigenvectors of $L(\G_{\mathbf{f}})$. $\mathbf{f}$ can be easily retrieved from $\mathbf{c}$ by inversion: $\mathbf{f}=U \mathbf{c}$.

\begin{figure}[t]
\begin{minipage}[b]{1.0\linewidth}
  \centering
\includegraphics[width=4cm]{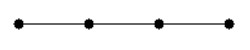}
  \vspace{0.4cm}
  \centerline{(a)}\medskip
\end{minipage}
\begin{minipage}[b]{1\linewidth}
  \centering
\includegraphics[width=4cm]{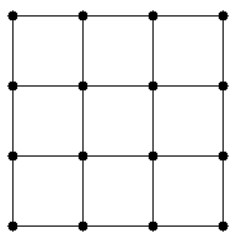}
  \centerline{(b)}\medskip
\end{minipage}
\caption{Two graph models: (a) the path graph $\mathcal P_4$, (b) the square grid graph $\mathcal P_4\times \mathcal P_4$.}
\label{fig:struct}
\end{figure}

One can also recast some  existing transforms as graph Fourier transforms on a specific topology. 
An example is the equivalence between the 1D-DCT and the graph Fourier transform of a path graph. We define a path graph $\mathcal P_N$ as a graph with $N$ vertices and line topology, as shown in Figure \ref{fig:struct}(a). It is known that the eigenvectors of $L(\mathcal P_N)$  are equal to the basis vectors of the 1D-DCT (more precisely DCT-2) \cite{strang1999discrete}. Specifically, the 1D-DCT has $N$ basis vectors $\{\mathbf{v}^{(k)}\}_{k=0}^{N-1}$ which are defined as

\begin{equation}\label{eq1}
\mathbf{v}_{j}^{(k)}=\cos\left (\frac{\pi k}{N} \left (j+\frac{1}{2}\right )\right ), \qquad j,k=0,1,...,N-1.
\end{equation}

Each $\mathbf{v}^{(k)}$ is  the eigenvector of $L(\mathcal P_N)$, for any $k=0,1,...,n-1$, $L$,  associated with the eigenvalue 
\begin{equation}
\lambda_k = 4\sin^2\left ( \frac{\pi k}{2N}\right ).
\label{eq2}
\end{equation}
Given that the multiplicity of the eigenvalues in \eqref{eq2} is always equal to 1, the 1D-DCT basis is the unique eigenbasis for $L(\mathcal P_N)$, therefore the graph Fourier transform for a signal represented by a path graph is equivalent to the 1D-DCT transform.

Let us now consider the product graph of two path graphs, as shown in  Figure \ref{fig:struct}(b). If the two path graphs have the same number of vertices, their product graph $\mathcal P_n\times \mathcal P_n$ is a square grid graph with $N=n^2$ vertices. It has been proved that the basis vectors of the 2D-DCT form an eigenbasis of $L(\mathcal P_n\times \mathcal P_n)$ \cite{zhang2013analyzing}. 

Moreover, the spectrum of the Laplacian of a product graph depends on the spectrum of the two generator graphs, as illustrated in the following theorem.
\begin{theorem}[Theorem 2.21 in \cite{merris1994laplacian}; \cite{merris1998laplacian}]\label{theo:compute_eigen}
 Let $\G_1$ and $\G_2$ be graphs on $N_1$ and $N_2$ vertices, respectively. Then the eigenvalues of $L(\G_1\times \G_2)$ are all possible sums of $\lambda_i(\G_1)+\lambda_j(\G_2)$, with $0\le i\le N_1-1$ and $0\le j\le N_2-1$. Moreover, if $\mathbf{v}^{(i)}$ is an eigenvector of $\G_1$ corresponding to $\lambda_i(\G_1)$, $\mathbf{v}^{(j)}$ an eigenvector of $\G_2$ corresponding to $\lambda_j(\G_2)$, then $\mathbf{v}^{(i)}\otimes \mathbf{v}^{(j)}$ (where $\otimes$ indicates the Kronecker product) is an eigenvector of $\G$ corresponding to $\lambda_i(\G_1)+\lambda_j(\G_2)$.
\end{theorem}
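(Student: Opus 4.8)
The plan is to reduce the whole statement to one algebraic identity, namely that the Laplacian of a Cartesian product is the \emph{Kronecker sum} of the two factor Laplacians,
\begin{equation*}
L(\G_1\times\G_2) \;=\; L(\G_1)\otimes I_{N_2} \;+\; I_{N_1}\otimes L(\G_2),
\end{equation*}
and then to read off the spectrum using the mixed-product rule $(P\otimes Q)(R\otimes S) = (PR)\otimes(QS)$.

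First I would fix the labelling of $\V_1\times\V_2$ so that the vertex $(v_i,u_j)$ occupies position $(i-1)N_2+j$; with this convention a signal on $\G_1\times\G_2$ is the vectorization of an $N_1\times N_2$ array and the block--Kronecker bookkeeping is routine. Using the adjacency rule recalled just before the theorem --- $(v_1,u_1)$ and $(v_2,u_2)$ are adjacent iff either $v_1=v_2$ and $\{u_1,u_2\}\in\E_2$, or $\{v_1,v_2\}\in\E_1$ and $u_1=u_2$ --- I would check directly that
\begin{equation*}
A(\G_1\times\G_2) = A(\G_1)\otimes I_{N_2} + I_{N_1}\otimes A(\G_2),
\end{equation*}
the first summand accounting for the edges produced by condition (b) and the second for those produced by condition (a); the two families are disjoint, so there is no double counting. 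Likewise the degree of $(v_i,u_j)$ in the product is $\deg_{\G_1}(v_i)+\deg_{\G_2}(u_j)$, whence $\Gamma(\G_1\times\G_2) = \Gamma(\G_1)\otimes I_{N_2} + I_{N_1}\otimes\Gamma(\G_2)$. Subtracting and using $L=\Gamma-A$ yields the Kronecker-sum identity above.

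With that identity the eigenvector claim is immediate. If $L(\G_1)\mathbf{v}^{(i)}=\lambda_i(\G_1)\mathbf{v}^{(i)}$ and $L(\G_2)\mathbf{v}^{(j)}=\lambda_j(\G_2)\mathbf{v}^{(j)}$, then by the mixed-product rule
\begin{align*}
L(\G_1\times\G_2)\,(\mathbf{v}^{(i)}\otimes\mathbf{v}^{(j)})
&= (L(\G_1)\mathbf{v}^{(i)})\otimes\mathbf{v}^{(j)} + \mathbf{v}^{(i)}\otimes(L(\G_2)\mathbf{v}^{(j)})\\
&= \big(\lambda_i(\G_1)+\lambda_j(\G_2)\big)\,(\mathbf{v}^{(i)}\otimes\mathbf{v}^{(j)}).
\end{align*}
To see that these are \emph{all} the eigenvalues, with the correct multiplicities, I would invoke symmetry of the Laplacians: each $L(\G_k)$ admits an orthonormal eigenbasis, and since $\langle \mathbf{a}\otimes\mathbf{b},\,\mathbf{c}\otimes\mathbf{d}\rangle = \langle\mathbf{a},\mathbf{c}\rangle\langle\mathbf{b},\mathbf{d}\rangle$, the $N_1N_2$ vectors $\mathbf{v}^{(i)}\otimes\mathbf{v}^{(j)}$ are orthonormal and hence form a basis of $\R^{N_1N_2}$. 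So they constitute a complete system of eigenvectors, and the multiset $\{\lambda_i(\G_1)+\lambda_j(\G_2)\}_{i,j}$ is exactly the spectrum of $L(\G_1\times\G_2)$.

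The only genuinely delicate point is the first step: verifying $A(\G_1\times\G_2)=A(\G_1)\otimes I + I\otimes A(\G_2)$ requires being consistent about the vertex ordering and about which factor plays the role of the outer block index --- the opposite convention gives $I\otimes A(\G_1)+A(\G_2)\otimes I$, which is conjugate to the first by the perfect-shuffle permutation and hence isospectral, so nothing essential changes, but then the Kronecker product in the eigenvectors must be taken in the matching order. Everything after the Kronecker-sum identity is standard Kronecker-product linear algebra and needs no real computation.
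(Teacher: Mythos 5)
Your proof is correct and complete: the Kronecker-sum identity $L(\G_1\times\G_2)=L(\G_1)\otimes I_{N_2}+I_{N_1}\otimes L(\G_2)$, the mixed-product rule for the eigenvector claim, and the orthonormal-basis argument for exhaustiveness of the spectrum together establish everything in the statement, and your remark about the vertex-ordering convention correctly identifies the only place where care is needed. The paper itself offers no proof --- it imports the theorem verbatim from Merris's survey (Theorem 2.21 there) --- and your argument is exactly the standard one given in that reference, so there is nothing to compare beyond noting that you have supplied the proof the paper omits.
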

\subsection{Analysis of the eigenvalues' multiplicity}
Leveraging the results presented in the previous paragraph, we build a new transform that can be oriented in any direction. 
Using Theorem \ref{theo:compute_eigen}  and equations \eqref{eq1} and \eqref{eq2}, we can compute the eigenvalues and the  eigenvectors of $L(\mathcal P_n\times \mathcal P_n)$ (which, for simplicity, are labeled with a double index):
\begin{equation}
\lambda _{k,l}=\lambda_k+\lambda_l=4\sin^2\left ( \frac{\pi k}{2n}\right )+4\sin^2\left ( \frac{\pi l}{2n}\right ),
\label{eq3}
\end{equation}
\begin{equation*}
\mathbf{v}^{(k,l)}=\mathbf{v}^{(k)} \otimes \mathbf{v}^{(l)},~~ 0\le k,l \le n-1,
\end{equation*}
where $\mathbf{v}^{(k)}$ is the eigenvector of $\mathcal P_n$ corresponding to $\lambda_k$ and $\mathbf{v}^{(l)}$ is the eigenvector corresponding to $\lambda_l$. From (\ref{eq3}), it is evident that some repeated eigenvalues are present, due to symmetry: $\lambda_{k,l}=\lambda_{l,k}$ for $k\ne l$. Moreover, through straightforward computations, it is possible to prove that the eigenvalue $\lambda=4$ has algebraic multiplicity $n-1$ and corresponds to all eigenvalues $\lambda_{k,n-k}$ with $1\le k\le n-1$. 
Therefore, in the spectrum of $L$ there are only $n-1$ eigenvalues with algebraic multiplicity equal to 1 (i.e. $\lambda_{k,k}$ with $k\ne n/2$), and all the others but $\lambda_{k,n-k}$ have algebraic multiplicity 2. It is important to highlight that even if $\lambda_{k,l}=\lambda_{l,k}$ when $k\ne l$, we still have that $\mathbf{v}^{(k,l)}$ and $\mathbf{v}^{(l,k)}$ are linearly independent, because the Kronecker product is not commutative. Therefore, the geometric multiplicity is equal to the algebraic multiplicity. This means that the dimension of the eigenspaces corresponding to these eigenvalues is bigger than one. This proves the following proposition.
\begin{proposition} The 2D-DCT is not the unique eigenbasis for the Laplacian of a square grid graph.
\end{proposition}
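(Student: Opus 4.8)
The plan is to exhibit explicitly a second orthonormal basis of eigenvectors of $L(\mathcal P_n\times\mathcal P_n)$ that differs from the 2D-DCT, using the repeated eigenvalues identified in the discussion above (the statement is vacuous for $n=1$, so assume $n\ge 2$). First I would fix any pair of indices $k\ne l$ with $0\le k,l\le n-1$, which exists since $n\ge 2$. By Theorem~\ref{theo:compute_eigen} together with \eqref{eq1}--\eqref{eq3}, both $\mathbf v^{(k,l)}=\mathbf v^{(k)}\otimes\mathbf v^{(l)}$ and $\mathbf v^{(l,k)}=\mathbf v^{(l)}\otimes\mathbf v^{(k)}$ are eigenvectors of $L(\mathcal P_n\times\mathcal P_n)$ for the \emph{same} eigenvalue $\lambda_{k,l}=\lambda_{l,k}$. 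I would then verify that these two vectors are linearly independent: since $k\ne l$, the 1D-DCT vectors $\mathbf v^{(k)}$ and $\mathbf v^{(l)}$ are linearly independent, and a short argument on Kronecker products (noncommutativity) shows $\mathbf v^{(k)}\otimes\mathbf v^{(l)}$ and $\mathbf v^{(l)}\otimes\mathbf v^{(k)}$ cannot be proportional. Hence the eigenspace $E_{\lambda_{k,l}}$ has dimension at least two and contains the orthonormal pair $\mathbf v^{(k,l)},\mathbf v^{(l,k)}$, the orthonormality being inherited from the 2D-DCT basis.

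Next, for any angle $\theta$ that is not an integer multiple of $\pi/2$, I would form the rotated pair
\begin{align*}
\tilde{\mathbf v}_1 &= \cos\theta\,\mathbf v^{(k,l)}+\sin\theta\,\mathbf v^{(l,k)},\\
\tilde{\mathbf v}_2 &= -\sin\theta\,\mathbf v^{(k,l)}+\cos\theta\,\mathbf v^{(l,k)}.
\end{align*}
Being an orthogonal rotation of an orthonormal pair, $\{\tilde{\mathbf v}_1,\tilde{\mathbf v}_2\}$ is again orthonormal and spans $\mathrm{span}\{\mathbf v^{(k,l)},\mathbf v^{(l,k)}\}\subseteq E_{\lambda_{k,l}}$, so both are eigenvectors of $L$ for $\lambda_{k,l}$. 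I would replace $\mathbf v^{(k,l)},\mathbf v^{(l,k)}$ in the 2D-DCT basis by $\tilde{\mathbf v}_1,\tilde{\mathbf v}_2$ and leave all other 2D-DCT vectors untouched. The resulting family is a set of $N=n^2$ mutually orthonormal eigenvectors of $L$: orthogonality of $\tilde{\mathbf v}_1,\tilde{\mathbf v}_2$ to the untouched vectors is automatic, since eigenvectors of the symmetric matrix $L$ for distinct eigenvalues are orthogonal, and the other 2D-DCT vectors inside $E_{\lambda_{k,l}}$ (if any) were already orthogonal to the plane we rotated. Hence this family is a genuine eigenbasis of $L(\mathcal P_n\times\mathcal P_n)$.

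Finally I would argue that this eigenbasis is truly distinct from the 2D-DCT one. Because $\theta$ is not a multiple of $\pi/2$, both $\sin\theta$ and $\cos\theta$ are nonzero, so $\tilde{\mathbf v}_1$ is a nontrivial combination of the linearly independent vectors $\mathbf v^{(k,l)}$ and $\mathbf v^{(l,k)}$, hence not a scalar multiple of any single 2D-DCT basis vector; thus the constructed basis cannot coincide with the 2D-DCT basis even up to reordering and sign changes, which proves the proposition. (One could equally use any invertible — not necessarily orthogonal — change of basis inside $E_{\lambda_{k,l}}$; the rotation is chosen merely to keep orthonormality.) The only steps requiring any care are the linear-independence check for $\mathbf v^{(k,l)}$ and $\mathbf v^{(l,k)}$ and the observation that the rotated vector escapes the 2D-DCT basis; everything else is immediate from standard properties of symmetric matrices and of the Kronecker product, so I do not expect a real obstacle here.
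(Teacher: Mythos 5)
Your proposal is correct and follows essentially the same route as the paper: both rest on the observation that $\lambda_{k,l}=\lambda_{l,k}$ for $k\ne l$ gives eigenspaces of dimension two spanned by the linearly independent (indeed orthonormal) pair $\mathbf v^{(k,l)},\mathbf v^{(l,k)}$, from which non-uniqueness of the eigenbasis follows. The only difference is that you explicitly exhibit the rotated basis and check it differs from the 2D-DCT, whereas the paper stops at the multiplicity argument and introduces the same rotation \eqref{eq:rot} immediately afterwards when defining the transform.
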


\begin{figure}[t]
\centering
\includegraphics[width=6.4cm]{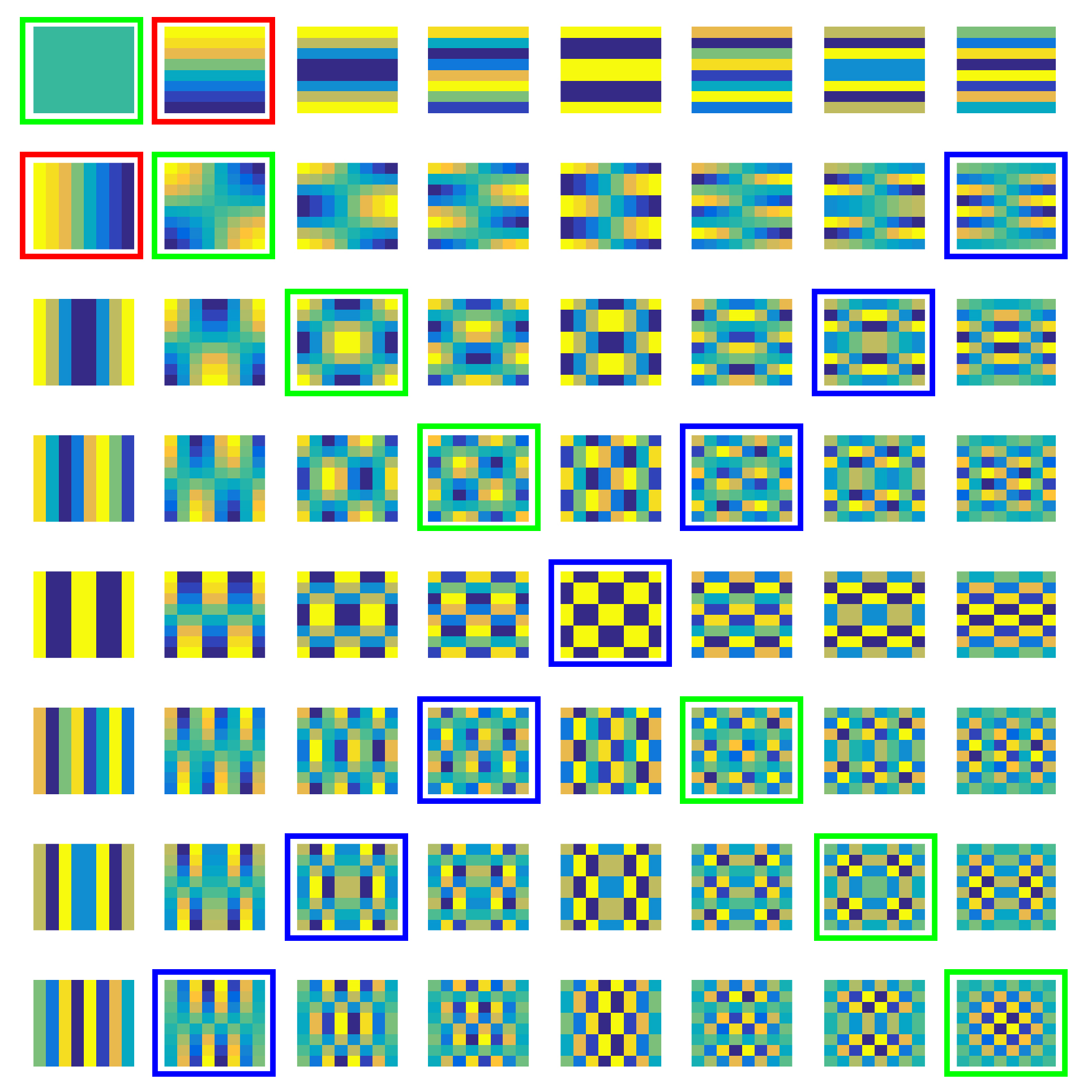}
\caption{2D-DCT basis vectors represented in matrix form (with $n=8$): the corresponding two eigenvectors of an eigenvalue with multiplicity 2 are highlighted in red, the $n-1$ eigenvectors corresponding to $\lambda=4$ are highlighted in blue and the $n-1$ eigenvectors corresponding to the eigenvalues with algebraic multiplicity 1 are highlighted in green.}
\label{fig:basisDCT}
\end{figure}
In Figure \ref{fig:basisDCT} the 2D-DCT basis with $n=8$ is represented in matrix form; as an example, we have highlighted in red the corresponding two eigenvectors of an eigenvalue with multiplicity 2: we can see that they are clearly related to each other, since they represent the same frequency, one in the horizontal direction and the other in the vertical direction.

\subsection{Transform definition}\label{par:transform_definition}
Since the 2D-DCT is not the unique eigenbasis for $L(\mathcal P_n\times \mathcal P_n)$, we aim to find all the other possible eigenbases and choose as transform matrix the one that better fits the properties of the specific image block that we are about to encode.

Given an eigenvalue $\lambda_{k,l}$ of $L(\mathcal P_n\times \mathcal P_n)$ with multiplicity 2 and the two vectors of the 2D-DCT $\mathbf{v}^{(k,l)}$ and $\mathbf{v}^{(l,k)}$ that are the eigenvectors of $L(\mathcal P_n\times \mathcal P_n)$ corresponding to $\lambda_{k,l}$, we can write any other possible basis of the eigenspace corresponding to $\lambda_{k,l}$ as the result of a rotation of $\mathbf{v}^{(k,l)}$ and $\mathbf{v}^{(l,k)}$
\begin{equation}
\begin{bmatrix}
\mathbf{v}^{(k,l)'}\\
\mathbf{v}^{(l,k)'}
\end{bmatrix}
=
\begin{bmatrix}
\cos\theta_{k,l} & \sin\theta_{k,l}\\
-\sin\theta_{k,l} & \cos\theta_{k,l}
\end{bmatrix}
\begin{bmatrix}
\mathbf{v}^{(k,l)}\\
\mathbf{v}^{(l,k)}
\end{bmatrix},
\label{eq:rot}
\end{equation}
where $\theta_{k,l}$ is an angle in $[0,2\pi]$. \added{The rotation described in \eqref{eq:rot} can also be defined as a Givens rotation \cite{golub1996matrix} in the plane described by $\mathbf{v}^{(k,l)}$ and $\mathbf{v}^{(l,k)}$ of the $n^2$-dimensional space.}

For every $\lambda_{k,l}$ with multiplicity 2, we can rotate the corresponding eigenvectors as shown in (\ref{eq:rot}); the $n-1$ eigenvectors corresponding to $\lambda=4$  are rotated in pairs $\mathbf{v}^{(k,n-k)}$ and $\mathbf{v}^{(n-k,k)}$, if $n$ is even $\mathbf{v}^{(\frac{n}{2},\frac{n}{2})}$ is not rotated. In the 2D-DCT matrix, the pairs $\mathbf{v}^{(k,l)}$ and $\mathbf{v}^{(l,k)}$ are replaced with the rotated ones $\mathbf{v}^{(k,l)'}$ and $\mathbf{v}^{(l,k)'}$ obtaining a new transform matrix $V(\theta)\in\R^{n^2\times n^2}$ that can be defined only by the rotation angles used, which we have to transmit to the decoder. The number of angles used is equal to the number of rotated pairs, that is $p=\frac{n(n-1)}{2}$. The new transform matrix $V(\theta)$ can be written as
\[
V(\theta)= V\ROT(\theta),
\]
where $V=V(0)\in\R^{n^2\times n^2}$ is the 2D-DCT transform matrix, $\theta\in\R^{p}$ is the vector containing all the angles used and $\ROT(\theta)\in \R^{n^2\times n^2}$ is the rotation matrix, whose structure is defined so that, for each pair of vectors, it performs the rotation as defined in \eqref{eq:rot}.

$\ROT(\theta)$ can be decomposed in two matrices as 
\[
\ROT(\theta)=\Delta+\widetilde{\ROT}(\theta),
\]
where $\Delta\in\R^{n^2\times n^2}$ is a constant matrix representing the vectors that do not rotate, and $\widetilde{\ROT}(\theta)\in\R^{n^2\times n^2}$ represents the vectors that are rotated. $\Delta$ is a diagonal matrix, with $\Delta_{ii}=1$ for any $i=k n+k$ with $0\le k\le n-1$; otherwise, $\Delta_{ii}=0$. Given $0\le k,l\le n-1$ and $k\ne l$, if $i=kn+l$ and $j=ln+k$, then $\widetilde{\ROT}(\theta)_{ii}=\widetilde{\ROT}(\theta)_{jj}=\cos\theta_{k,l}$, $\widetilde{\ROT}(\theta)_{ij}=\sin\theta_{k,l}$ and $\widetilde{\ROT}(\theta)_{ji}=-\sin\theta_{k,l}$, otherwise $\widetilde{\ROT}(\theta)_{ij}=0$. Then, for any signal $\mathbf{f}\in\R^{n^2}$ our new transform, which in the following will be referred to as SDCT, is defined as follows:
\begin{equation}\label{eq:our_tranform}
\mathbf{c}=V(\theta)^T \mathbf{f}=\ROT(\theta)^TV^T\mathbf{f}=(\Delta^T+\widetilde{\ROT}(\theta)^T)V^T\mathbf{f}.
\end{equation}
\added{Equation \eqref{eq:our_tranform} shows that the SDCT can be decomposed as a product of a rotation matrix $\ROT(\theta)$ and the 2D-DCT transform matrix $V$. Moreover, let $\mathbf{c}_{DCT}\in\R^{n^2}$ be the DCT coefficients of the signal $\mathbf{f}$, then the SDCT can be computed in the following way
\begin{equation}
\label{eq:sep}
\mathbf{c}=\ROT(\theta)^T\mathbf{c}_{DCT}.
\end{equation}
In this way, the complexity of the SDCT can be drastically reduced because $\mathbf{c}_{DCT}$ can be computed using the separability property. Then, to compute the SDCT coefficients, $\mathbf{c}_{DCT}$ is multiplied by the sparse matrix $\ROT(\theta)$.}

\begin{figure}[t]
  \centering
\includegraphics[width=4.5cm]{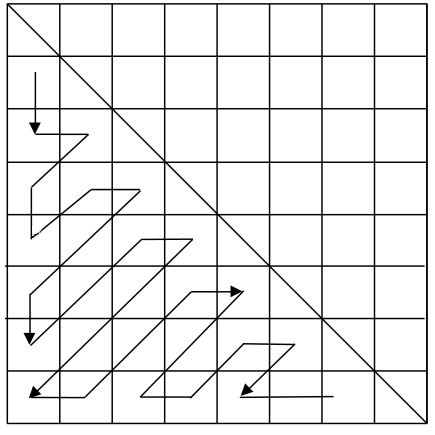}
\caption{Zigzag ordering for the $p$ components of $\theta$.}
\label{fig:ord}
\end{figure}

The components $\theta_{k,l}$ of $\theta$ are ordered using the zigzag pattern shown in Figure \ref{fig:ord}. Unlike the classical zigzag ordering, in this case we consider only $p$ elements, since $\theta_{k,l}=\theta_{l,k}$ and the diagonal elements $\theta_{k,k}$ are not considered, since the eigenvectors $\mathbf{v}^{(k,k)}$ do not rotate.

The transform \eqref{eq:our_tranform} is still the graph transform of a square grid graph, but with a different set of orientations with respect to DCT. As an example, in Figure \ref{fig:basisRDCT}, we show the basis vectors obtained rotating by $\frac{\pi}{4}$ every pair of eigenvectors. As can be seen, the diagonal elements $\mathbf{v}^{(k,k)}$ are the same as the DCT ones because the corresponding eigenvalues have multiplicity one, instead all the others are rotated by $\frac{\pi}{4}$. 
\begin{figure}[t]
\centering
\includegraphics[width=6.6cm]{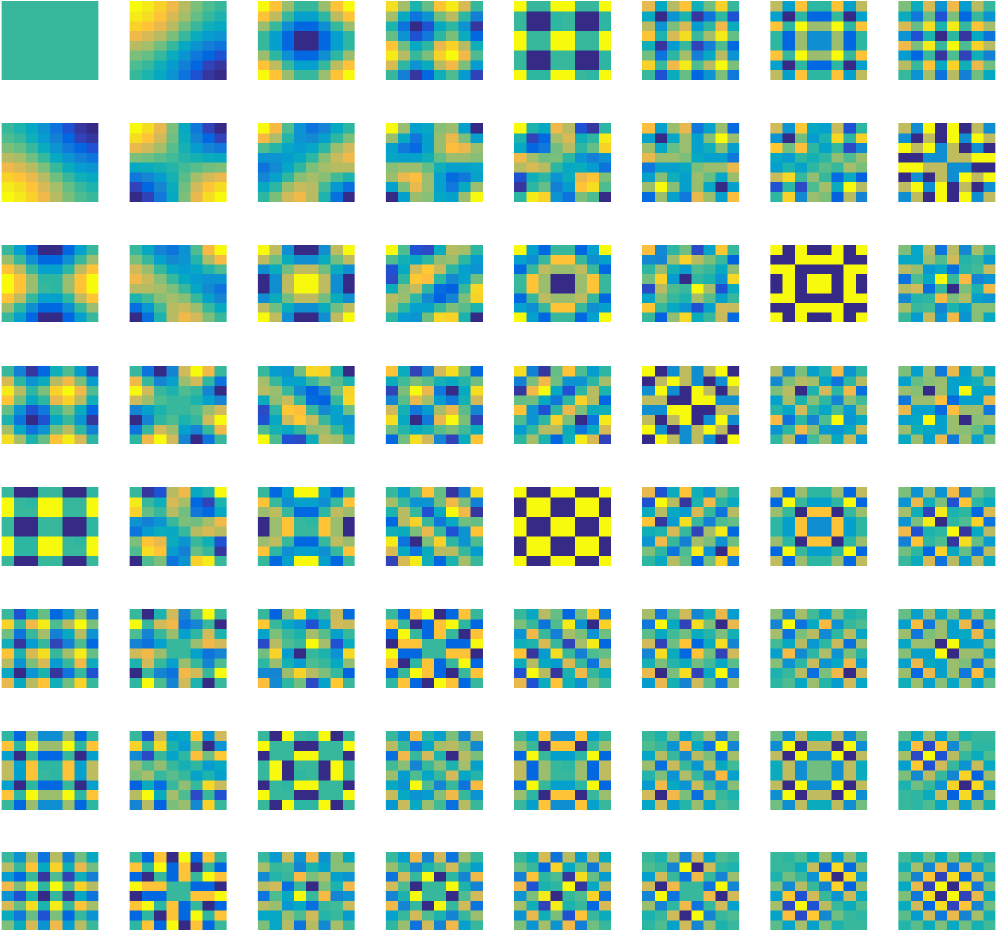}
\caption{Steerable DCT with $\theta=\frac{\pi}{4}$.}
\label{fig:basisRDCT}
\end{figure}

\section{Optimal rotation}\label{sec:optimal_rotation}
In the previous section, we have shown that a new transform can be derived rotating  $2p=n(n-1)$ columns of an $n^2\times n^2$ DCT matrix. The aim of this section is to determine  the set of \emph{optimal} rotation angles under a suitable criterion.

Since our ultimate goal is efficient compression,  a sparse (i.e., with many null coefficients) vector of transform coefficients is desirable. We now illustrate that we can find analytically the rotation $\theta$ that provides the sparsest coefficient representation \eqref{eq:our_tranform}. Let $\mathcal{I}\in\R^{n^2}$ be the original (vectorized) image block. Given an eigenvalue $\lambda_{k,l}$ of $L$ with geometric multiplicity $2$ and its corresponding eigenvectors $\mathbf{v}^{(k,l)}$ and $\mathbf{v}^{(l,k)}$,  for a given block, the corresponding DCT coefficients can be expressed as
$$c_{k,l}=\mathbf{v}^{(k,l)^T}\mathcal{I},$$
$$c_{l,k}=\mathbf{v}^{(l,k)^T}\mathcal{I}.$$
If we rotate this pair of eigenvectors by an angle 
\begin{equation}\label{eq:opt_rot}
\theta_{k,l}=\arctan\frac{c_{k,l}}{c_{l,k}}
\end{equation}
either of the two corresponding coefficients becomes null.
In fact, given $\mathbf{v}^{(k,l)'}$ and $\mathbf{v}^{(l,k)'}$, which are obtained rotating $\mathbf{v}^{(k,l)}$ and $\mathbf{v}^{(l,k)}$ by $\theta_{k,l}$ as in \eqref{eq:rot}, the new transform coefficients are
$$c_{k,l}'=\mathbf{v}^{(k,l)'^T}\mathcal{I},$$
$$c_{l,k}'=\mathbf{v}^{(l,k)'^T}\mathcal{I}.$$

From \eqref{eq:opt_rot}, we conclude that $c_{k,l}'=0$ and all the energy of this coefficient pair is conveyed to $c_{l,k}'$, as shown in Figure \ref{fig:opt}. 
\begin{figure}[t]
  \centering
\includegraphics[width=6cm]{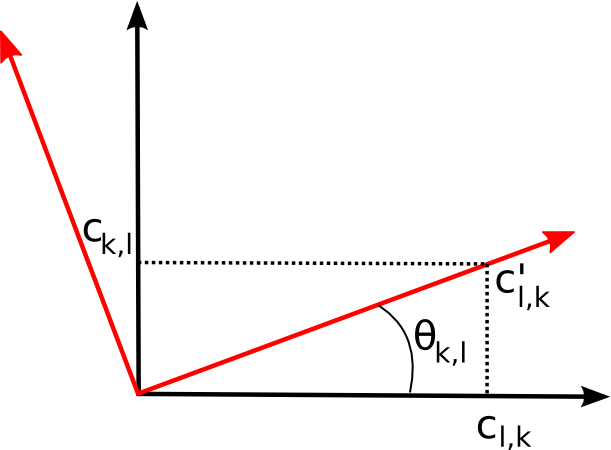}
\caption{Sparsifying rotation: using the angle defined in \eqref{eq:opt_rot} $p$ transform coefficients are exactly null.}
\label{fig:opt}
\end{figure}
\added{This rotation provides the sparsest representation: it exactly nullifies $p$ coefficients.} This is obviously advantageous as it provides a lossless encoding of the image with $n^2-p$ coefficients instead of $n^2$. Nevertheless, since the decoder should also know the $p$ rotation angles  to recover the image,  the total number of values to transmit turns out to be the same. \removed{If quantization is performed on coefficients, DCT produces a lossy representation of the image in which the smallest coefficients are quantized to zero. With SDCT,  quantization error would not affect the zero coefficients, but a price has to be paid in terms of transmission of the angles.} \added{For this motivation, a less sparse solution (i.e., with a smaller number of zero coefficients) or a non-exactly sparse solution (i.e., with many coefficients close to zero, but non exactly zero) might be preferable in RD terms, if it involves less rotation angles.}

The optimal choice of the number and value of rotation angles can be naturally cast as a RD  problem in the variables $\coeff$ and $\theta$.

\subsection{RD  model}
Let $V(\theta)=V\ROT(\theta)$ be the steered transform matrix,  $\mathcal{I}\in\R^{n^2}$ be the  image block, $\theta=(\theta_1,\dots,\theta_p)$ be the ordered set of angles, and $\coeff=(c_1,\dots,c_{n^2})^T$ be the coefficients in the transform \eqref{eq:our_tranform}. As a distortion metric we employ the reconstruction error:
\begin{equation}
D(\coeff,\theta):=\|\mathcal{I}-V(\theta)\coeff\|_2^2.
\end{equation}
We consider two rate contributions, that is, the transform coefficients rate $R_{\coeff}$ and the rotation angles' rate $R_{\theta}$. The total rate is  $R(\coeff,\theta)=R_\mathbf{c}+R_{\theta}$.

In \cite{kim2001novel, dre10, mal98}, it has been shown that for DCT transforms there is an approximately linear relationship between the coding bitrate $R_{\coeff}$ and the $\ell_0$-norm of $\mathbf{c}$, that is, the number of its non-zero coefficients, i.e.
\begin{equation}\label{eq:alpha}
R_{\coeff}=\alpha\|\coeff\|_0
\end{equation}
where $\alpha$ can be empirically found \cite{kim2001novel}. 

Let us now discuss $R_{\theta}$. In \cite{ioc}, we have considered the simple case of using the same angle for all the eigenspaces, and concluded that this is sufficient to outperform classical 2D-DCT. Our aim is now to study the intermediate cases, seeking the optimal number and values of angles yielding the best balance between recovery accuracy and rate. 

Specifically, we split the angles into subbands of DCT coefficients, choosing a single angle for all coefficients in each subband, so that  the vector $\theta$ is piecewise constant. Let $s$ be the number of subbands: if $s<p$, instead of transmitting $p$ angles, we require only $s$ angles values and $s$ indexes indicating where the subvectors end. Assuming no compression for the angles and a quantization over $q_{\theta}$ values in $[0, 2\pi]$ for each angle,  the transmission amounts to $s\lceil \log_2 q_{\theta}\rceil +s \lceil\log_2 p\rceil$, which clearly increases much slower than than $p\lceil \log_2 q_{\theta}\rceil$. We notice that $s$ can be expressed as a function of $\theta$ as follows:
$$s=\|B\theta\|_0$$
where $B\in\R^{p\times p}$ is the discrete difference operator, given by:
\begin{equation*}
B=\left(\begin{array}{ccccc}
1&0&\cdots&\cdots&0\\
-1&1&0&\cdots&0\\
0&\vdots&\ddots&\ddots&0\\
0&\cdots&\cdots&-1&1\\
\end{array}\right).
\end{equation*}

In conclusion, we define the angles rate as follows:
\begin{equation}
R_{\theta}=\|B\theta\|_0 (\lceil\log_2 q_{\theta}\rceil +\lceil \log_2 p \rceil).
\end{equation}

Finally, we assume that both $\coeff$ and $\theta$ are quantized, and denote as $\Q_{\coeff}\subset \R$ and $\Q_{\theta}\subset [0,\pi]$ the respective sets  of available reconstruction values for each component, so that $\coeff\in \Q_{\coeff}^N$ and $\theta \in \Q_{\theta}^p$. We are now ready to define our RD optimization problem. As in \cite{sullivan1998rate}, we consider the following Lagrangian relaxation:
\begin{equation}\label{eq:RD_lagrangian}
\begin{split}
&\min_{\coeff\in\Q_{\coeff}^N,~\theta\in\Q_{\theta}^p} J(\coeff,\theta)\\
&J(\coeff,\theta)=D(\coeff,\theta)+\lambda (R_{\coeff}+R_{\theta})\\
&~~~~~~~~=\|\mathcal{I}-V(\theta)\coeff\|_2^2+\\&~~~~~~~~+\lambda \left[\alpha\|\coeff\|_0  +(\lceil \log_2 q_{\theta}\rceil+\lceil \log_2 p \rceil) \|B\theta\|_0\right],\\
\end{split}
\end{equation}
where $\lambda>0$ is the Lagrangian parameter.

The problem \eqref{eq:RD_lagrangian}  is similar to sparse signal recovery problems, for which hard thresholding techniques can be used  \cite{blu08}. Briefly, a functional of kind $\|A\mathbf{x}-\mathbf{y}\|_2^2+\lambda\|\mathbf{x}\|_0$ with $\mathbf{x}\in\R^n$ and invertible $A\in\R^{n\times n}$  has global minimum at $\mathcal{H}_{\sqrt{\lambda}}[A^{-1}\mathbf{y}]$, where $\mathcal{H}_{\sqrt{\lambda}}:R^n\to \R^n$ is the hard-thresholding operator that sets to zero all the components smaller than $\sqrt{\lambda}$ in magnitude of its input vector. This can be derived as a simpler subcase of iterative hard thresholding for sparse problems \cite[Equation 2.1-2.2]{blu08}: since our transform matrix is orthogonal, the procedure stops after one iteration.  

Our problem is made more difficult by the non-convexity of the distortion term due to the variable $\theta$. However, we remark that the problem is well posed, because it is lower bounded by 0, and it is proper (if $\coeff$ goes to infinity, $J$ tends to infinity as well). This encourages to search  a solution; to this end, we undertake alternated minimization on separated variables. In particular, we notice that the problem can be analytically solved with respect to the individual variables $\coeff$ and $\theta_1,\dots,\theta_p$.

\section{Proposed algorithms}\label{sec:algorithms}
In this section, we present the proposed algorithms SDCT-AM and SDCT-BT to seek the best set of rotations for SDCT.

In the previous section, we have defined the RD optimization problem \eqref{eq:RD_lagrangian} and observed that a global solution is difficult to find due to global non-convexity.  However, the problem is mathematically tractable in the individual variables $\coeff,\theta_1,\theta_2,\dots,\theta_q$ , as we are going to show, and an alternated minimization  achieves a partial optimum (i.e., a local minimum or a saddle point). This is the basis of SDCT-AM.
\subsection{Alternated minimization: SDCT-AM}\label{par:SDCT-AM}

Assuming $\theta$ fixed, the evaluation of $\min_{\coeff\in\Q_{\coeff}^N}J(\coeff,\theta)$ is straightforward. We have

\begin{equation*}
\begin{split}
\min_{\coeff\in\Q_{\coeff}^N}J(\coeff,\theta)&=\min_{\coeff\in\Q_{\coeff}^N}D(\coeff,\theta)+\lambda\alpha \|\coeff\|_0\\
&=\min_{\coeff\in\Q_{\coeff}^N}\sum_{i=1}^{n^2}\left[c_i-(V^T(\theta)\mathcal{I})_i\right]^2+\lambda\alpha \left\|c_i\right\|_0.
\end{split}
\end{equation*}
Therefore, we can solve a separated problem for each component $\coeff_i$, whose solution is given by

$$\mathcal{H}_{\sqrt{\lambda\alpha}}\left[\mathcal{Q}\left[V^T(\theta)\mathcal{I})_i\right]\right]  $$ 
where, for any $x\in\R$, $\mathcal{Q}\left[x\right]$ and $\mathcal{H}_{\sqrt{\lambda\alpha}}\left[x\right]$ respectively indicate the quantization operator that projects onto $\Q_{\coeff}$ and the hard thresholding operator with threshold $\sqrt{\lambda\alpha}$ defined as $\mathcal{H}_{\sqrt{\lambda\alpha}}\left[x\right]=x$ if $|x|>{\sqrt{\lambda\alpha}}$, and 
$\mathcal{H}_{\sqrt{\lambda\alpha}}\left[x\right]=0$ if $|x|\leq {\sqrt{\lambda\alpha}}$.

We notice that $$\left[\mathcal{Q}\left[V^T(\theta)\mathcal{I})_i\right]\right]=\arg\min_{c_i\in\Q_{\coeff}}\left[c_i-(V^T(\theta)\mathcal{I})_i\right]^2$$ since $\left[c_i-(V^T(\theta)\mathcal{I})_i\right]^2$ is convex and symmetric.

The procedure to  minimize $J(\coeff,\theta)$ with respect to $\theta_j$, $j\in\{1,\dots,p\}$, is similar. We have 
\begin{equation*}
\begin{split}
\min_{\theta_j\in\Q_{\theta}}&J(\coeff,\theta)=\hspace{-0.1cm}\min_{\theta_j\in\Q_{\theta}}\hspace{-0.1cm}D(\coeff,\theta)+\lambda(\lceil \log_2 q_{\theta}\rceil+\lceil \log_2 p \rceil)\left\|B\theta\right\|_0\\
\end{split}
\end{equation*}
where the term $\left\|B\theta\right\|_0$ can be substituted by $\left\|\theta_j-\theta_{j+1}\right\|_0+\left\|\theta_j-\theta_{j-1}\right\|_0$ for $j\in\{2,\dots,p-1\}$, by $\left\|\theta_1\right\|_0+\left\|\theta_1-\theta_{2}\right\|_0$ for $j=1$, and by $\left\|\theta_q-\theta_{q-1}\right\|_0$ for $j=p$. 

First, we analytically evaluate $\min_{\theta_j\in[0,2\pi]}D(\coeff,\theta)$. Since $V(\theta)$ is orthogonal for any $\theta$,

\begin{equation*}
\begin{split}
 \|\mathcal{I}-V(\theta)\coeff\|_2^2&= \|\mathcal{I}\|_2^2-2\mathcal{I}^T V(\theta)\coeff+\|\coeff\|_2^2.
\end{split}
 \end{equation*}
Furthermore, it is straightforward to check that  we can define a  matrix $W=W(\coeff)\in\R^{n^2\times 2p}$  such that 
\begin{equation*}
V \widetilde{\ROT}(\theta)\coeff = W(\coeff)(\cos(\theta_1), \sin(\theta_1),\dots, \cos(\theta_p), \sin(\theta_p))^T,
\end{equation*}
$\widetilde{\ROT}(\theta) $ being defined in Section \ref{par:transform_definition}. In this way,
\begin{equation*}
\begin{split}
 V(\theta)\coeff&=V[\Delta +\widetilde{\ROT}(\theta)]\coeff\\
  &=V\Delta\coeff +W(\coeff)(\cos(\theta_1), \sin(\theta_1),\dots)^T.\\
  \end{split}
 \end{equation*}
 Therefore,
\begin{equation*}
 \min_{\theta_j\in[0,\pi]}D(\coeff,\theta)= \min_{\theta_j\in[0,\pi]} -2\mathcal{I}^T W(\coeff)(\cos(\theta_1), \sin(\theta_1),\dots)^T.
\end{equation*}
We then compute the derivative with respect to $\theta_j$, which is equal to zero when $\mathcal{I}^T( W^{(2j)}\sin(\theta_j) -W^{(2j+1)}\cos(\theta_j))=0$, i.e.,

$$\theta_j = \arctan\left(\frac{ \mathcal{I}^T W^{(2j+1)}}{\mathcal{I}^T W^{(2j)}} \right)$$
where $W^{(i)}$ indicates the $i$th column  of $W$. This equation has one solution in $[0,\pi]$, which could be either the maximum or the minimum. For continuity, it suffices to compare this solution with the extreme values $\theta_j=0$ and $\theta_j=\pi$ to obtain the minimum.

Afterwards, as for $\coeff_i$, we proceed by projecting onto $\Q_{\theta_j}$ (again, convexity and symmetry of the subproblem guarantee that $\widehat{\theta}_j=\arg\min_{\theta_j\in\Q_{\theta_j}}D(\coeff,\theta)=\mathcal{Q}[\arg\min_{\theta_j\in[0,\pi]}D(\coeff,\theta)]$). Finally, we perform hard thresholding, which consists in evaluating which one among $\widehat{\theta}_j, \theta_{j-1}, \theta_{j+1}$ is the most convenient choice for $\theta_j$, $j=2,\dots,p-1$ that is, which value provides the minimum $J$. For $j=1$ and $j=p$, clearly the choice is among $\widehat{\theta}_1, 0, \theta_{2}$, and $\widehat{\theta}_p, \theta_{p-1}$.

Alternating these minimization tasks we obtain SDCT-AM, which is summarized in Algorithm 1.

\begin{algorithm}\label{alg:SDCT-AM}
\caption{SDCT-AM}
\label{alg1}
\begin{algorithmic}[1]
\STATE Initialize: $\theta(0), \coeff(0)$;
\FOR{t=1,2,\dots}
\STATE $\coeff(t)=\arg\min_{\coeff\in\Q(\coeff)^N}J(\coeff,\theta)$ (see Section \ref{par:SDCT-AM})
\FOR{$j=p,p-1,\dots,1$}
\STATE $\theta_j(t)=\arg\min_{\theta_j\in\Q(\theta)}J(\coeff,\theta)$ (see Section \ref{par:SDCT-AM})
\ENDFOR
\IF {$J(\coeff(t-1),\theta(t-1))=J(\coeff(t),\theta(t))$}
\STATE \textbf{break}
\ENDIF
\ENDFOR
\end{algorithmic}
\end{algorithm}

\begin{theorem}
There is a time $t_0$ in which $J(c(t),\theta(t))$ in SDCT-AM stabilizes at a  partial optimum.
\end{theorem}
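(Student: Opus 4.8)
The plan is to show that the objective values $J(\coeff(t),\theta(t))$ generated by SDCT-AM form a non-increasing sequence that is bounded below and, because of quantization, takes only finitely many distinct values; such a sequence must become constant after finitely many steps, so the equality test of Algorithm~\ref{alg1} fires at some finite $t_0$, and the iterate returned at that time is a coordinatewise minimizer of $J$, i.e.\ a partial optimum in the sense used above (a local minimum or a saddle point).

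\textbf{Monotonicity and termination.} First I would check that every elementary update inside one iteration of SDCT-AM exactly minimizes $J$ over a single block of variables with the others frozen: line~3 replaces $\coeff$ by an element of $\arg\min_{\coeff\in\Q_{\coeff}^{n^2}}J(\coeff,\theta)$ (the composition $\mathcal{H}_{\sqrt{\lambda\alpha}}\circ\mathcal{Q}$ of the quantizer and the hard-threshold operator derived in Section~\ref{par:SDCT-AM}), and each pass of line~5 replaces $\theta_j$ by an element of $\arg\min_{\theta_j\in\Q_{\theta}}J(\coeff,\theta)$, obtained there in closed form. Chaining the $p+1$ updates performed during iteration $t$ gives $J(\coeff(t-1),\theta(t-1))\ge J(\coeff(t),\theta(t-1))\ge\cdots\ge J(\coeff(t),\theta(t))$, so $\{J(\coeff(t),\theta(t))\}_t$ is non-increasing. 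Since $\lambda>0$ and the three terms $D$, $R_{\coeff}=\alpha\|\coeff\|_0$ and $R_{\theta}=\|B\theta\|_0(\lceil\log_2 q_{\theta}\rceil+\lceil\log_2 p\rceil)$ are all non-negative, $J\ge 0$, so the sequence is bounded below. To upgrade ``convergent'' to ``eventually constant'' I would use that the feasible set is finite in the region visited: $\Q_{\theta}$ is a finite grid, and even if $\Q_{\coeff}$ is an infinite lattice, every iterate satisfies $J(\coeff(t),\theta(t))\le J(\coeff(0),\theta(0))=:J_0$, hence $\lambda\alpha\|\coeff(t)\|_0\le J_0$ and, by orthogonality of $V(\theta)$, $\|\coeff(t)\|_2=\|V(\theta)\coeff(t)\|_2\le\|\mathcal{I}\|_2+\sqrt{J_0}$; thus all iterates lie in a finite subset of $\Q_{\coeff}^{n^2}\times\Q_{\theta}^{p}$ and $J$ takes only finitely many real values along the run. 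A non-increasing real sequence confined to a finite set is eventually constant, so there is $t_0$ with $J(\coeff(t),\theta(t))=J(\coeff(t_0),\theta(t_0))$ for all $t\ge t_0$; in particular the equality tested by the stopping rule holds at some finite time and the algorithm halts.

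\textbf{The halting point is a partial optimum.} Let $(\coeff^\star,\theta^\star)$ be the iterate at the halting step. Because the $J$-value has stabilized, a further (hypothetical) iteration starting from $(\coeff^\star,\theta^\star)$ cannot decrease it; combined with the chain of inequalities above, this forces \emph{every} inequality produced in that iteration to be an equality. The equality at line~3 says that $\coeff^\star$ attains $\min_{\coeff\in\Q_{\coeff}^{n^2}}J(\coeff,\theta^\star)$, and the equality at the $j$-th instance of line~5 says that $\theta^\star_j$ attains $\min_{\theta_j\in\Q_{\theta}}J(\coeff^\star,\theta^\star)$ for every $j$; that is, no single-block move lowers $J$, which is precisely the defining property of a partial optimum.

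Steps involving monotonicity and finiteness are routine. The delicate point is the last one: a priori $J$ could remain constant while the iterate keeps cycling among several distinct points that share this value, so that there is no well-defined limiting iterate. I would handle this either by a genericity remark (for $\mathcal{I}$ in general position each single-block subproblem has a unique minimizer, so the iterate itself is eventually constant) or, more robustly, by applying the one-extra-iteration argument directly to the halting iterate and reading the coordinatewise optimality conditions off the forced equalities, which certifies $(\coeff^\star,\theta^\star)$ as a partial optimum regardless of ties.
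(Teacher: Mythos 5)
Your proof follows essentially the same route as the paper's: monotonicity of $J$ under the exact block-coordinate updates, lower-boundedness by $0$, stabilization forced by quantization, and coordinatewise optimality of the stabilized iterate. You also fill in two details the paper leaves implicit --- the finiteness of the set of values $J$ can attain along the run (via the bound $\lambda\alpha\|\coeff(t)\|_0\le J_0$ together with the norm bound on $\coeff(t)$ from orthogonality of $V(\theta)$), and the possibility of cycling among distinct iterates sharing the same cost, which you correctly resolve with the one-extra-iteration argument --- so the proposal is, if anything, more careful than the original.
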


\begin{proof}
The alternated minimization of SDCT-AM guarantees that the sequence $J(\coeff(t),\theta(t))$ is not increasing. Since $J$ is lower bounded by 0 and is a proper function (if $\coeff$ goes to infinity, $J$ tends to infinity), it admits a minimum. Therefore $J(\coeff(t),\theta(t))$ is not increasing and compact, which implies that is convergent. Since $\coeff(t)$ and $\theta(t)$ are quantized values, convergence turns out to be a stabilization, that is, from a time step $t_0$, $J(\coeff(t),\theta(t))$ is constant. Finally, it is easy to check that $(\coeff(t_0),\theta(t_0))$ is partial optimum, because the functional increases moving along the coordinate directions.

\end{proof}

\added{A consequence of this theorem is that the SDCT-AM performance is always better than or equal to the DCT performance, in RD terms. In fact, Since SDCT-AM decreases $J$, it is sufficient to initialize SDCT-AM with DCT to be sure to perform better (or at least equivalently, in the case that DCT is a partial optimum of $J$).

Moreover, the theorem suggests also a stop criterion for SDCT-AM: when $J(\coeff(t),\theta(t))=J(\coeff(t-1),\theta(t-1))$, the algorithm can be stopped. 

}
\subsection{Binary tree for angles structure: SDCT-BT}\label{par:SDCT}

SDCT-AM (Algorithm \ref{alg:SDCT-AM}) is proved to achieve a partial optimum of the RD functional $J$, which is the best results that one can expect to achieve, due to the non-convexity of the problem. In the following we propose an alternative algorithm, called SDCT-BT, which reduces the angles side information cost, allowing more freedom in choosing the rotation angles. Based on the construction of a binary tree to describe the angles subband division, SDCT-BT cannot be theoretically analyzed in terms of a minimization problem, but is experimentally proved to perform well.

Before illustrating SDCT-BT, we specify that in this approach $\coeff$ and $\theta$ are no more considered as separated variables, since $\coeff$ in this case is the vector of the quantized transform coefficients obtained by performing the SDCT: each time we modify $\theta$, we automatically set $\coeff=\mathcal{Q}[V(\theta)^T\mathcal{I}]$, where $\mathcal{Q}$ indicate the operation of quantization onto $\Q_{\coeff}$. Therefore, we will only use the variable $\theta$, and accordingly we will use $J(\theta)$ to indicate the cost functional. 

Moreover, $J(\theta)$ is slightly different from $J(\coeff, \theta)$ in the rate definition. For $R_{\coeff}$, we  use the real bitrate, while  $R_{\theta}$  is determined by the angle selection procedure that we  illustrate in the following.

The angles setting of SDCT-BT is as follows. We start from a single angle value, say one subband, and we iteratively decide if it is convenient to split into different subbands. Specifically, we impose that each subband can be divided into two subbands of equal length if this decreases $J$ (spare pairs of vectors are included in the last group), as shown in Figure \ref{fig:sub}. The decision about splitting a subband is taken by performing an exhaustive search over all possible  $q_{\theta}$ angles and selecting the one minimizing $J$; if the so-obtained $J(\theta)$ is smaller than the current cost $\widehat{J}$, then the split is accepted, and $\widehat{J}=J(\theta)$. We proceed until no more improvement can be obtained, or when the maximum number of subbands is achieved. 

As depicted in Figure \ref{fig:sub_tran}, this procedure is efficient because it can be encoded as a binary decision tree with the root set at level 1. Each node of this tree represents a possible subband and is set to 1 if it actually is a subband, and 0 otherwise. Nodes labeled with 0 are linked to two new nodes, while nodes labeled with 1 are leafs. We represent the final subband subdivision by signaling the decision tree starting from top level 1.

In this way, if the number of subbands is $s$,  the number of nodes in the decision tree is $2s-1$; then we have to signal only $2s-1$ bits. For SDCT-AM the subband structure is encoded over  $s\lceil \log_2 p \rceil= \|B\theta\|_0\lceil \log_2 p \rceil$, which is larger than $2s-1$ for any $p\geq2$.

\begin{figure}[t]
  \centering
  \includegraphics[width=7cm]{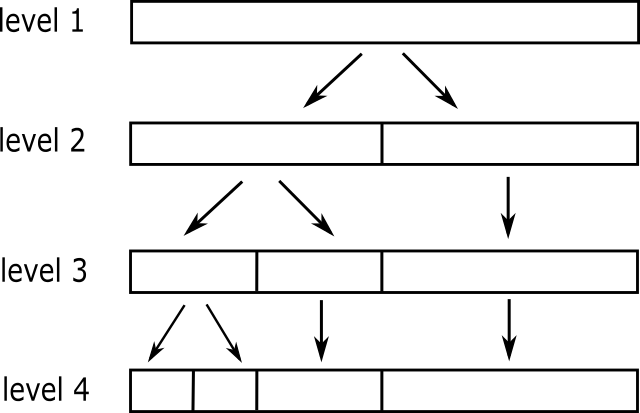}
\caption{Binary subband subdivision for SDCT: from level $1$ downwards, we split a subband if this operation decreases the cost functional $J$}
\label{fig:sub}
\end{figure}
\begin{figure}[t]
  \centering
\includegraphics[width=7.5cm]{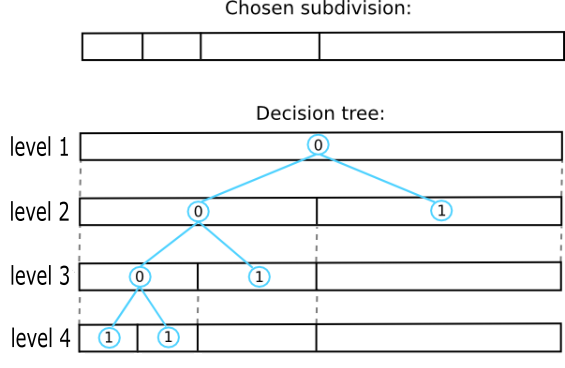}
\caption{Signaling of the subbands structure: from level $1$ downwards, we transmit the labels of the nodes in the binary decision tree.}
\label{fig:sub_tran}
\end{figure}

\begin{algorithm}\caption{SDCT-BT}
\label{alg:SDCT-BT}
\begin{algorithmic}[1]
\STATE Initialize: $k=0$,  $\hat{\theta}=(\theta_0,\theta_0,\dots,\theta_0)$  (i.e. $1$ subband), $\widehat{J}=J(\hat{\theta})$ 
\FOR{$k=1\dots, \lfloor \log_2 p\rfloor$}
\FOR{each subband $s$}
\STATE Split $s$ into two groups of equal length
\STATE $\theta=\widehat{\theta}$
\STATE Sequentially, for each group $g$,  
$$\theta_{j}=\omega~~\text{ for all } j\in g$$
where $\omega=\arg\min_{x\in\Q_{\theta}}J  $ (found via exhaustive search)
\IF{ $J(\theta)<\widehat{J}$}
\STATE $\widehat{J}=J(\theta)$
\STATE $\widehat{\theta}=\theta$
\STATE the two groups are accepted as new subbands
\ENDIF
\ENDFOR
\IF{no split is performed at the current level $k$}
\STATE \textbf{break}
\ENDIF
\ENDFOR
\end{algorithmic}
\end{algorithm}

SDCT-BT is summarized in Algorithm \ref{alg:SDCT-BT}. As one can deduce from Figure \ref{fig:sub_tran}, for each accepted split we use 2 additional bits to signal it.

\subsection{Image codec based on steerable DCT}\label{sub:encoder}
When using SDCT-AM and SDCT-BT, we need to encode three different types of information: the transform coefficients, the rotation angles, and the subband subdivision.
To code the transform coefficients, we perform an uniform quantization and then we code the quantized coefficients using an adaptive bit plane arithmetic coding. 

To code the rotation angles, we fix $q_{\theta}=8$ quantization levels for the angles, uniformly set in $[0,\pi]$ for both SDCT-AM and  SDCT-BT. Then, we use $\log_2 q_{\theta}=3$ bit to transmit each rotation angle. We do not perform any compression on the angles, as their distribution, as observed in our tests, does not exhibit an evident compressibility. \added{In order to improve the compression performance, as future work we may consider a non-uniform angle quantization.}

Regarding the subband subdivision, the two proposed algorithms present two different encoding methods, as explained in the previous part of the section. SDCT-AM requires $(\lceil \log_2 p \rceil)s$ bits, where $p=\frac{n(n-1)}{2}$, and $s$ is the number of subbands.

As also done in \cite{zeng2008directional}, we take into account 1 more bit for each block to declare whether we are applying the directional method or the classical DCT.

\section{Experimental results}\label{sec:experimental}
In this section, we evaluate the performance of the proposed SDCT-AM and SDCT-BT methods and compare them to the state-of-the-art directional transforms. We perform an objective comparison computing the PSNR and a subjective comparison evaluating the SSIM index \cite{wang2004image}. \added{At the end of the section, we also propose some considerations and experiments about a possible future implementation of the SDCT in the HEVC standard.}

We test  SDCT-AM (Algorithm \ref{alg:SDCT-AM}) and SDCT-BT (Algorithm \ref{alg:SDCT-BT}) on some standard grayscale images and on intra-frame prediction errors. For the prediction errors, we use HEVC to generate intra-frame prediction residuals on the first frame of few test video sequences. For both images and residual frames, we use different block sizes $n\times n$ with $n\in\{8,16,32\}$. 
We compare their performance against the classical DCT, the Directional DCT \cite{zeng2008directional} and the SDCT with only one rotation angle per block (SDCT-1), as proposed in \cite{ioc}. \added{ In Section \ref{sec:wave} we also show a brief comparison between wavelets and SDCT.}

For the DDCT and the SDCT-1, we code the transform coefficients using the same method used for SDCT-AM and SDCT-BT (see Sec. \ref{sub:encoder}); in addition to the bitrate of the coefficients, we count 3 bit per block to transmit the chosen angle and one additional bit to signal if we are using the directional method or the classical DCT. \added{Regarding the wavelets, we use CDF 9/7 wavelets and we code the transform coefficients with the same method used for the other transforms.}

For all our simulations, we consider $q_{\theta}=8$ angles uniformly set in $[0,\pi]$, as explained in Section \ref{sub:encoder}. We initialize both SDCT-AM and SDCT-BT with one single angle, testing all 8 possible initializations and eventually choosing the best one. For SDCT-BT, the maximum number of iterations is set by $\lfloor \log_2 p\rfloor$, while for SDCT-AM we get a stationary point in very few iterations (less than 10).

For SDCT-AM, we need to select the parameter $\alpha$ defined in \eqref{eq:alpha}. As we do not know $R_{\coeff}$ and $\|\coeff\|_0$ in advance, we employ the values of $R_{\coeff}$ and $\|\coeff\|_0$ estimated by the classical DCT, multiplied by 2 (we observe in fact that slight overestimation is more safe).

\subsection{Objective comparison}
\begin{table}[t]
\centering

\caption{Average gain in PSNR with respect to DCT measured with Bjontegaard metric (tests on images)}
\label{tab:bjo_classic}
\resizebox{\columnwidth}{!}{  \begin{tabular}{cc|cccc}
Image & block size & DDCT & SDCT-1 & SDCT-AM & SDCT-BT\\
\hline\hline
 House& 8$\times$8 & 0.325 & 0.382 & 0.406 &0.432\\
 256$\times$256& 16$\times$16 & 0.274 & 0.335 &  0.636 &0.563\\
 & 32$\times$32 & 0.312 & 0.259 &0.718 &0.603\\
\hline
Barbara& 8$\times$8 & 0.285 & 0.288 & 0.328 &0.321\\
512$\times$512 & 16$\times$16 & 0.153 & 0.195 &0.507 &0.392\\
 & 32$\times$32 & 0.074 & 0.093 & 0.567 &0.448\\
\hline
Boat & 8$\times$8 & 0.238 & 0.271 & 0.330 &0.301\\
512$\times$512 & 16$\times$16 & 0.105 & 0.160 &  0.499 &0.338\\
 & 32$\times$32 & 0.043 & 0.076 & 0.565 &0.392\\
\hline
Lena & 8$\times$8 & 0.349 & 0.347 &  0.375 &0.378\\
512$\times$512 & 16$\times$16 & 0.260 & 0.252 & 0.578 &0.460\\ 
 & 32$\times$32 & 0.170 & 0.129 & 0.624 &0.519\\
\hline
Aerial & 8$\times$8 & 0.343& 0.490 &  0.476 &0.572\\
256$\times$256 & 16$\times$16 & 0.132 & 0.297 & 0.512 &0.720\\ 
 & 32$\times$32 & 0.017 & 0.143 & 0.455 &0.985\\
\hline
Stream  & 8$\times$8 & 0.394 & 0.417 &  0.442 &0.476\\
512$\times$512 & 16$\times$16 & 0.165 & 0.256 & 0.547 &0.559\\ 
 & 32$\times$32 & 0.046 & 0.119 & 0.522 &0.736\\
\hline
Couple & 8$\times$8 & 0.239 & 0.294 &  0.341 &0.326\\
256$\times$256 & 16$\times$16 & 0.140 & 0.223 & 0.570 &0.456\\ 
 & 32$\times$32 & 0.066 & 0.114 & 0.620 &0.630\\
\hline
F16 & 8$\times$8 & 0.286 & 0.417 &  0.404 &0.459\\
512$\times$512 & 16$\times$16 & 0.198 & 0.340 & 0.620 &0.632\\ 
 & 32$\times$32 & 0.094 & 0.181 & 0.631 &0.729\\
\hline
\end{tabular}
}
\end{table}

\begin{table}[t]
\centering
\caption{Average gain  in PSNR with respect to DCT measured with Bjontegaard metric (tests on intra-prediction errors)}
\label{tab:bjo_video}
\resizebox{\columnwidth}{!}{ \begin{tabular}{cc|cccc}
Prediction residual & block size & DDCT & SDCT-1 & SDCT-AM & SDCT-BT\\
\hline\hline
RaceHorses & 8$\times$8 & 0.401 & 0.443 & 0.431  &0.477\\
 416$\times$240 & 16$\times$16 & 0.249 & 0.313 &  0.461 &0.625\\
 & 32$\times$32 & 0.119 & 0.164 & 0.354 &0.827\\
\hline
RaceHorses  & 8$\times$8 & 0.407 & 0.431 & 0.455 &0.459\\
 832$\times$480 & 16$\times$16 & 0.228 & 0.278 & 0.527 &0.549\\
& 32$\times$32 & 0.125 & 0.138 & 0.461 &0.776\\
\hline
BasketballPass & 8$\times$8 & 0.322 & 0.381 & 0.503 & 0.415\\
416$\times$240 & 16$\times$16 & 0.200 & 0.235 &0.619 &0.502\\
 & 32$\times$32 & 0.120 & 0.133 & 0.606 &0.652\\
\hline
PartyScene & 8$\times$8 &0.468 &0.368 &0.335 &0.388 \\
 832$\times$480 & 16$\times$16 &0.283 &0.235 &0.307 &0.451\\ 
 & 32$\times$32 &0.138 &0.122 & 0.243 &0.549\\
\hline
ChinaSpeed & 8$\times$8 &0.613 &0.391 &0.431 &0.382 \\
 1024$\times$768 & 16$\times$16 &0.486 &0.312 &0.565 &0.477 \\ 
 & 32$\times$32 &0.289 &0.150 & 0.491&0.527\\
\hline
Keiba & 8$\times$8 &0.207 &0.380 &0.455 & 0.435 \\
 416$\times$240 & 16$\times$16 &0.117 &0.226 & 0.507&0.546 \\ 
 & 32$\times$32 &0.078 &0.098 & 0.470 & 0.770\\
\hline
Keiba & 8$\times$8 &0.267 &0.331 &0.471&0.367 \\
 832$\times$480 & 16 $\times $16 &0.157 &0.205 &0.580 &0.419 \\ 
 & 32$\times$32 &0.068 &0.086 & 0.543 & 0.510\\
\hline
Kristen$\&$Sara & 8$\times$8 &0.265 &0.264 &0.417&0.273 \\
 1280$\times$720 & 16$\times$16 &0.217 &0.220 &0.607 &0.396 \\ 
 & 32$\times$32 &0.124 &0.129 & 0.644&0.529\\
\hline
\end{tabular}}
\end{table}
In Tables \ref{tab:bjo_classic} and \ref{tab:bjo_video},  we summarize our performance results in terms of average gain in PSNR compared to DCT, evaluated through the Bjontegaard metric \cite{bjontegaard2001calcuation}. 

In Table \ref{tab:bjo_classic}, the comparison is performed on eight classical grayscale images (House, Barbara, Boat, Lena, Aerial 5.1.10 \cite{img_db}, Stream and  Bridge 5.2.10 \cite{img_db}, Couple 4.1.02 \cite{img_db}, Airplane F16 4.2.05 \cite{img_db}; color images have been converted to grayscale). The gains obtained by DDCT and SDCT-1 are similar, and decrease as the block size increases. An inverse behavior characterizes SDCT-AM and SDCT-BT, which generally improve using larger blocks. For blocks $8\times8$, the four methods are quite similar, while for large blocks SDCT-AM and SDCT-BT are definitely preferable than DDCT and SDCT-1.
The PSNR gain ranges from 0.3 dB to nearly 1 dB.

In Table \ref{tab:bjo_video}, prediction errors are considered on eight different videos. The behavior is similar to that appreciated for images in  Table \ref{tab:bjo_classic}: the gain obtained by SDCT-AM and SDCT-BT with respect to DDCT and SDCT-1 is more consistent as the block size increases. In this case, the PSNR gain ranges from 0.3 dB to 0.8 dB.

\added{From the results we can see that the performance of SDCT-AM and SDCT-BT are similar. In certain cases (such as Boat or Barbara), SDCT-AM outperforms SDCT-BT.} \changed{Instead, in other cases }{In some cases,} the performance of SDCT-AM slightly decreases using larger block sizes, while that of SDCT-BT always increases. This happens mostly with prediction errors and with textured images (such as Aerial), for which the non-regularity may require a higher number of subbands. In such frameworks, SDCT-AM is penalized as it uses a larger number of bits to signal the subbands structure if compared to SDCT-BT.

\begin{figure}[h]
\begin{minipage}[b]{\linewidth}
  \centering
\includegraphics[width=7.5cm]{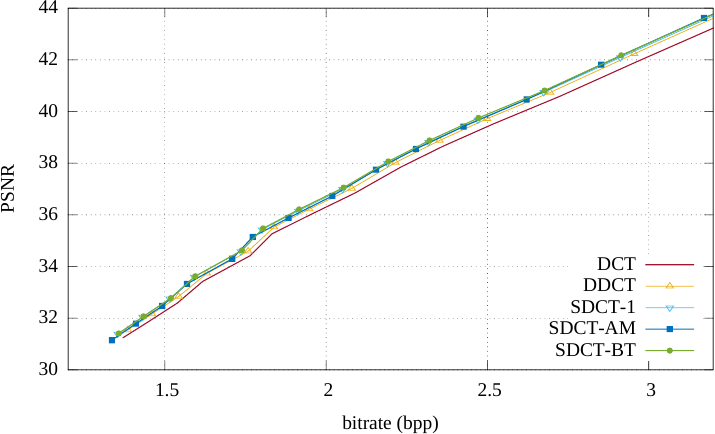}
\end{minipage}
\begin{minipage}[b]{\linewidth}
  \centering
\includegraphics[width=7.5cm]{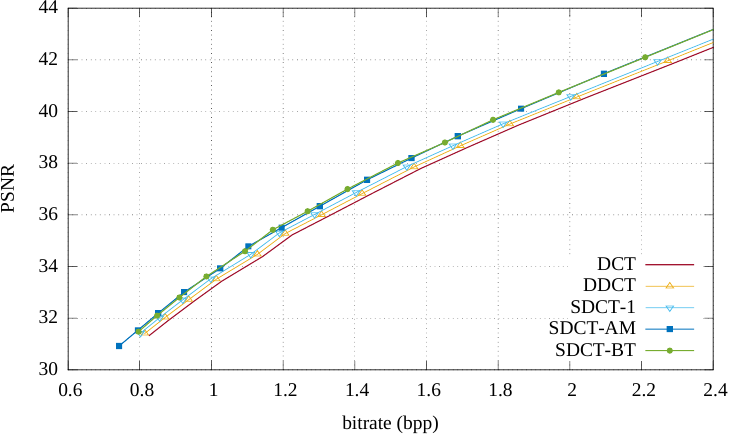}
\end{minipage}
\begin{minipage}[b]{\linewidth}
  \centering
\includegraphics[width=7.5cm]{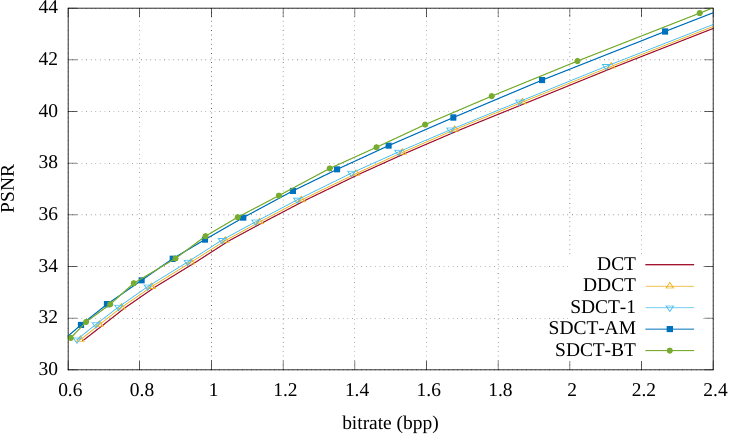}
\end{minipage}
\caption{RD performance comparison for the image Airplane F16 using different block sizes: from top to bottom, $n=8,16,32$}
\label{fig:rd_barb}
\end{figure}

In Figure \ref{fig:rd_barb}, we depict the RD curves concerning the image Airplane F16, for $n=8,16,32$. For $n=16,32$, SDCT-AM and SDCT-BT turn out to be better than the state-of-the-art methods. 

\added{
\subsection{Wavelet comparison}
\label{sec:wave}
For still image compression, coding schemes based on wavelets have achieved significantly better performance compared to DCT-based compression methods \cite{taubman2012jpeg2000}. Instead in video coding, wavelet-based compression methods have not shown significant performance gains versus DCT-based methods \cite{sikora2005trends}. In our work we consider both images and videos, but our focus is mainly on video compression and a possible future implementation of the SDCT in a video compression standard. For this reason, we use as main benchmark the DCT, that is the core transform of most video standards. However, we also present a comparison between wavelets and SDCT on a few sample images.

In Table \ref{tab:wav} we show a comparison between SDCT and wavelets for $n=16,32,64$. We evaluate the performance of SDCT-AM and SDCT-BT in terms of average gain in PSNR compared to wavelets. As we can see from the results, when the dimension of the block is small, SDCT-AM and SDCT-BT show a significant quality gain. Instead, at larger block size the wavelets usually outperform both SDCT-AM and SDCT-BT. It is interesting to point out that in the test Aerial with $n=64$ SDCT-BT outperforms the wavelets, which in turn outperform the classical DCT. This demonstrates that sometimes the improvement obtained by SDCT is significant to make the DCT approach more efficient than other approaches.}
\begin{table}[t]
\centering
\caption{Average gain in PSNR with respect to wavelets measured with Bjontegaard metric}
\label{tab:wav}
\resizebox{\columnwidth}{!}{ 
\begin{tabular}{cc|ccc}
Prediction residual & block size  &  SDCT-AM & SDCT-BT & DCT\\
\hline\hline
 Boat& 16$\times$16 &1.858&1.702&1.359\\
 512$\times$512 & 32$\times$32 & 1.589&1.384&0.985\\
  &64$\times$64&-2.183  &-1.862&-3.199\\
\hline
 Aerial& 16$\times$16 &1.724   & 1.519&1.012\\
256$\times$256 & 32$\times$32 & 1.161 &1.690&0.714\\
 & 64$\times$64 & -0.432&0.294& -0.886\\
\hline
 Stream& 16$\times$16 &1.265 &1.272 &0.718\\
 512$\times$512 & 32$\times$32 & 1.043 & 1.234 &0.495\\ 
 & 64$\times$64  &-0.802  &-0.232 &-1.364\\
\hline
\end{tabular}}
\end{table}

\begin{figure}[h!]
\begin{minipage}[b]{\linewidth}
  \centering
\includegraphics[width=7.5cm]{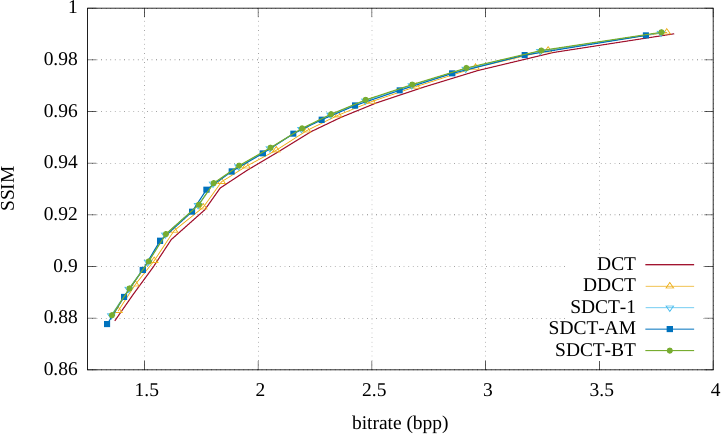}
\end{minipage}
\begin{minipage}[b]{\linewidth}
  \centering
\includegraphics[width=7.5cm]{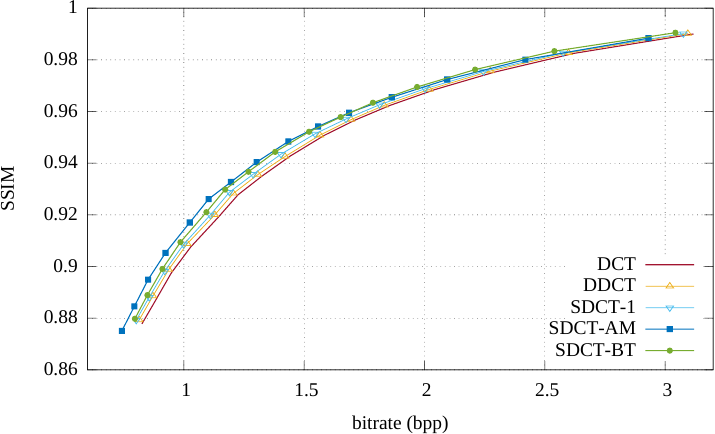}
\end{minipage}
\begin{minipage}[b]{\linewidth}
  \centering
\includegraphics[width=7.5cm]{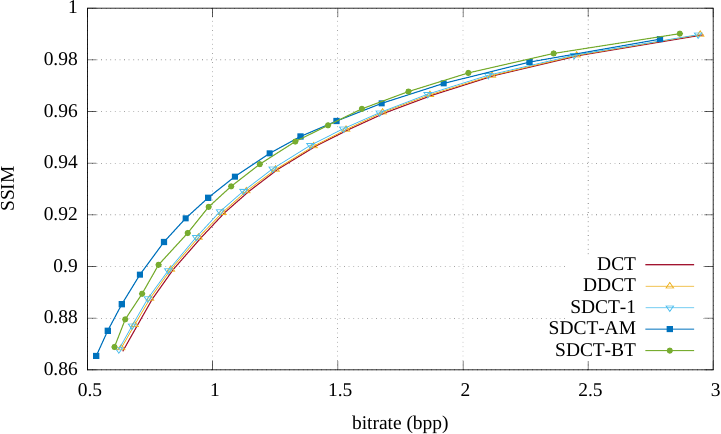}
\end{minipage}
\caption{SSIM performance comparison for the image Airplane F16 using different block sizes: from top to bottom, $n=8,16,32$}
\label{fig:ssim_barb}
\end{figure}

\subsection{Subjective comparison}

Since the PSNR is not always a good representation of the visual quality, we also compute the SSIM index in order to evaluate the perceived quality. The results for the image Barbara are shown in Figure \ref{fig:ssim_barb}. Also in this case, we can see that when we use smaller blocks the performance of the three directional methods are very similar, instead when the block size increases the SDCT clearly outperforms the other methods. 

\begin{figure}[h]
\centering
\begin{minipage}[b]{0.45\linewidth}
  \centering
\includegraphics[width=3.3cm]{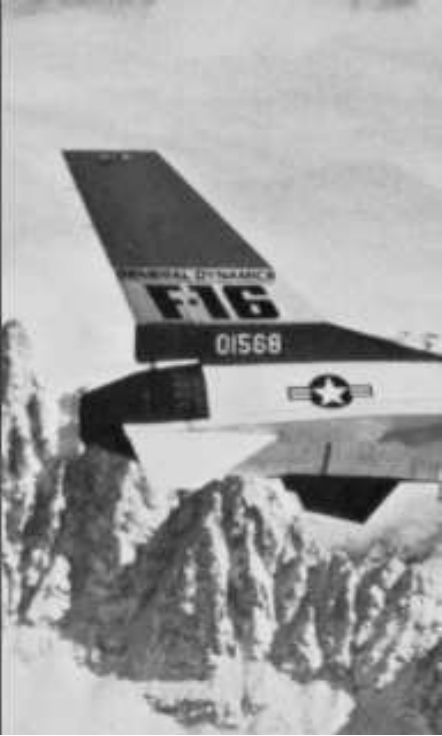}
Original image
\end{minipage}
\hfill
\begin{minipage}[b]{0.45\linewidth}
 \centering
\includegraphics[width=3.3cm]{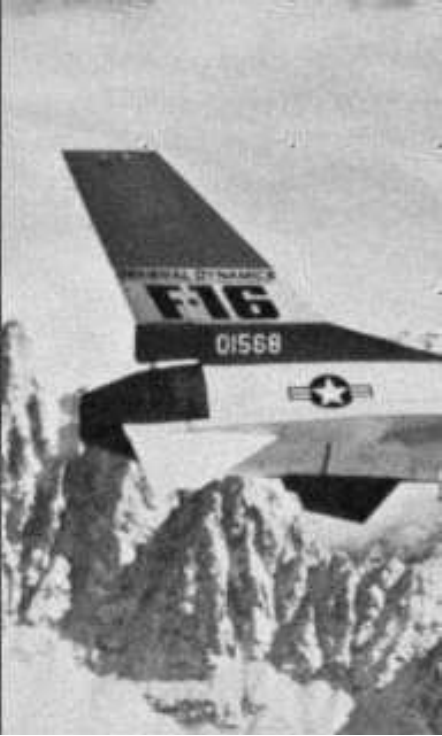}
DCT
\end{minipage}\\
\begin{minipage}[b]{0.45\linewidth}
  \centering
\includegraphics[width=3.3cm]{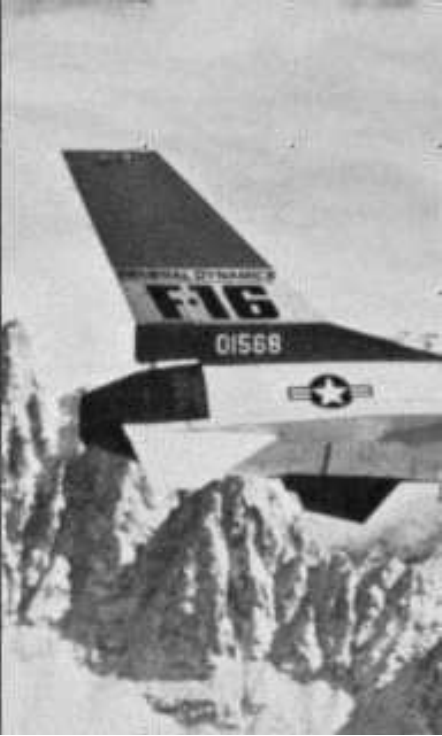}
SDCT-AM
\end{minipage}
\hfill
\begin{minipage}[b]{0.45\linewidth}
  \centering
\includegraphics[width=3.3cm]{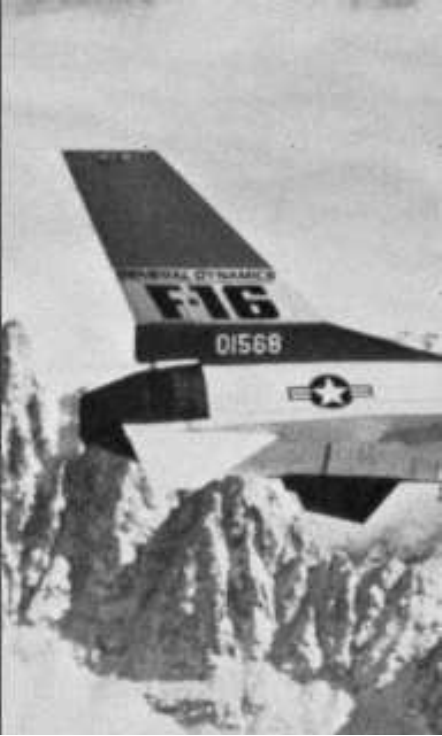}
SDCT-BT
\end{minipage}
\caption{Visual comparison on Airplane F16 image (block size 64$\times$64, 2 bpp): at the same bpp, DCT is more spotted than  SDCT-AM and SDCT-BT.}
\label{fig:visage_barb}
\end{figure}

In Figure \ref{fig:visage_barb}, we show a detail of F16 (block size 64$\times$64,  0.8 bpp) in which a visual improvement can be observed in SDCT-AM and SDCT-BT with respect to DCT. 

\added{\subsection{Future applications}
To conclude the experimental section, we propose some observations and tests regarding possible future applications of SDCT. In particular, we investigate the possibility to implement efficiently the proposed SDCT in the HEVC standard. 

In HEVC, the core transform is DCT \cite{budagavi2013core} \cite{wien2015high}. Replacing it with SDCT is then expected to produce a performance improvement. A test implementation of SDCT within HEVC is beyond the purpose of this paper and is left for future work. However, it is worth mentioning that HEVC uses an integer version of  DCT, i.e. an approximate DCT that can be stored using only integer values \cite{wien2015high}. This clearly has memory advantages, but involves a not exactly orthogonal transform. For this reason, we have tested SDCT using the same integer approximation in order to evaluate the possible drawbacks. To compute the integer approximation of the proposed SDCT we have used equation \eqref{eq:sep}, where the DCT coefficients are computed using the integer DCT defined in HEVC. The obtained results (of which we show just some samples in Table \ref{tab:integer}) are in line with the previous non-integer approach. Moreover, Table \ref{tab:perc} shows that in mostly all cases the proposed SDCT is chosen in a significant number of blocks.

This is a first step that suggests the possibilty to implement efficiently an integer SDCT in the HEVC standard.}

\begin{table}[t]
\centering
\caption{Integer SDCT for HEVC: average gain in PSNR with respect to integer DCT measured with Bjontegaard metric}
\label{tab:integer}
\resizebox{\columnwidth}{!}{ \begin{tabular}{cc|cc}
Prediction residual & block size  & integer SDCT-AM & integer SDCT-BT\\
\hline\hline
RaceHorses & 8$\times$8 &  0.429  &0.476\\
 416$\times$240 & 16$\times$16 & 0.458 &0.613\\
 & 32$\times$32 & 0.353 &0.830\\
\hline
BasketballPass & 8$\times$8 & 0.498 & 0.413\\
416$\times$240 & 16$\times$16 &0.615 &0.494\\
 & 32$\times$32 & 0.609 &0.642\\
\hline
Keiba & 8$\times$8 &0.452 &0.431 \\
 416$\times$240 & 16$\times$16 &0.505 &0.538\\ 
 & 32$\times$32  & 0.472 &0.754\\
\hline
\end{tabular}}
\end{table}

\begin{table}[t]
\centering
\caption{Percentage of blocks where the SDCT is chosen over the DCT at 40 $\mathrm{d}$B}
\label{tab:perc}
\resizebox{\columnwidth}{!}{ 
\begin{tabular}{cc|cc}
Prediction residual & block size  & integer SDCT-AM & integer SDCT-BT\\
\hline\hline
RaceHorses & 8$\times$8 & 50\%  &47\%\\
 416$\times$240 & 16$\times$16 &56\%  &71\%\\
 & 32$\times$32 &  38\%&84\%\\
\hline
BasketballPass & 8$\times$8 & 64\%  & 46\%\\
416$\times$240 & 16$\times$16 &80\% &65\%\\
 & 32$\times$32 & 77\% &78\%\\
\hline
Keiba & 8$\times$8 & 57\% & 44\%\\
 416$\times$240 & 16$\times$16 &79\% &60\%\\ 
 & 32$\times$32  & 89\%  & 80\%\\
\hline
\end{tabular}}
\end{table}

\section{Conclusions}\label{sec:conc}
Exploiting the properties of the graph transform of a grid graph, we have introduced  a new 2D-DCT that can be steered in any chosen direction, selecting different directions for different frequencies. We have introduced a RD optimization problem, whose solution provides the optimal number of rotation angles per block and their values. We have proposed the SDCT-AM algorithm, that achieves a partial optimum, and the SDCT-BT algorithm, that leverages a binary decision tree to lighten the load due to angles signaling. Comparisons with DCT and DDCT show that our methods achieves a significant gain for image blocks of size larger than $16\times16$.

\section*{Acknowledgement}
{This work has been supported by Sisvel Technology, and by the European Research Council under FP7 / ERC, Grant agreement n.279848 - CRISP project.}

%\bibliographystyle{IEEEtran}
%\newpage

\end{document}